\documentclass[lettersize,journal]{IEEEtran}
\usepackage{cite}
\usepackage{tumcolor}
\usepackage{amsmath,amssymb,amsthm,fixmath}
\usepackage{mathtools}
\usepackage{algorithm}
\usepackage{algorithmic}
\usepackage{xcolor}
\usepackage{bm}
\usepackage{makecell}
\usepackage{pgfplots}
\usepackage{tikz}
\usepackage{graphicx}
\usepackage{amsmath}
\usetikzlibrary{intersections,pgfplots.fillbetween,patterns,spy,shapes.geometric,calc,positioning}
\usetikzlibrary{arrows,decorations.pathmorphing,backgrounds,fit,matrix,fillbetween}
\usetikzlibrary{shapes.geometric, arrows.meta}
\usetikzlibrary{decorations.pathreplacing,decorations.markings}
\usepgfplotslibrary{groupplots}
\usepackage{subcaption}
\captionsetup{font={scriptsize}}
\usepackage{tabularx}
\usepackage{soul}
\usepackage[acronym,shortcuts]{glossaries}
\usepackage{multirow}
\usepackage{multicol}
\usepackage{makecell}
\usepackage{hhline}
 \usepackage{url}
 \usepackage{float}
 \usepackage{adjustbox}
 \usepackage{siunitx} 

\newtheorem{theorem}{Theorem}

\newtheorem{assumption}{Assumption}

\definecolor{Purple}{rgb}{0.6350 0.0780 0.1840}
\definecolor{DarkGreen}{rgb}{0 0.50 0}
\definecolor{BrightGreen}{rgb}{0.71, 0.89, 0.58}

\usepgfplotslibrary{external} 
\tikzexternalize

\newacronym{MMSE}{MMSE}{minimum mean squared error}
\newacronym{MSE}{MSE}{mean squared error}
\newacronym{NMSE}{NMSE}{normalized MSE}
\newacronym{LMMSE}{LMMSE}{linear minimum mean squared error}
\newacronym{WSS}{WSS}{wide sense stationary}
\newacronym{DFT}{DFT}{discrete Fourier transform}
\newacronym{EM}{EM}{expectation-maximization}
\newacronym{BIC}{BIC}{Bayesian information criterion}
\newacronym{AIC}{AIC}{Akaike information criterion}
\newacronym{i.i.d.}{i.i.d.}{independent and identically distributed}
\newacronym{FLOP}{FLOP}{floating point operation}
\newacronym{AWGN}{AWGN}{additive white Gaussian noise}
\newacronym{LS}{LS}{least squares}
\newacronym{SCM}{SCM}{sample \ac{CM}}
\newacronym[firstplural = {covariance matrices (CMs)}]{CM}{CM}{covariance matrix}
\newacronym{ICM}{ICM}{inverse \ac{CM}}
\newacronym{AR}{AR}{autoregressive}
\newacronym{MA}{MA}{moving-average}
\newacronym{ARMA}{ARMA}{autoregressive moving-average}
\newacronym{FBM}{FBM}{fractional Brownian motion}
\newacronym{FFT}{FFT}{fast Fourier transform}
\newacronym{GMM}{GMM}{Gaussian mixture model}
 \newacronym{VAE}{VAE}{variational autoencoder}
 \newacronym{NN}{NN}{neural network}
 \newacronym{PD}{PD}{positive definite}
 \newacronym{OP}{OP}{optimization problem}
\newacronym{GS}{GS}{Gohberg-Semencul}
\newacronym{CSI}{CSI}{channel state information}
\newacronym{SAVG}{SAVG}{SCM averaged along its diagonals}
\newacronym{UE}{UE}{user equipment}
\newacronym{DoA}{DoA}{direction of arrival}
\newacronym{DoAs}{DoAs}{directions of arrival}
\newacronym{DoDs}{DoDs}{directions of depature}
\newacronym{ML}{ML}{machine learning}
\newacronym{BS}{BS}{base station}
\newacronym{WSSUS}{WSSUS}{wide-sense-stationary-uncorrelated-scattering}
\newacronym{ULA}{ULA}{uniform linear array}
\newacronym{URA}{URA}{uniform rectangular array}
\newacronym{MIMO}{MIMO}{multiple-input-multiple-output}
\newacronym{OFDM}{OFDM}{orthogonal-frequency-division-multiplexing}

\newacronym{Eig}{Eig}{eigenvalue}
\newacronym{Frob}{Frob}{Frobenius}
\newacronym{PGD}{PGD}{projected gradient descent}
\newacronym{PLS}{PLS}{projected LS}
\newacronym{LOS}{LOS}{line-of-sight}
\newacronym{GAN}{GAN}{generative adversarial network}
\newacronym{NLOS}{NLOS}{non-line-of-sight}
\newacronym{SISO}{SISO}{single-input-single-output}
\newacronym{SNR}{SNR}{signal-to-noise ratio}
\newacronym{PG}{PG}{probabilistic graph}
\newacronym{BN}{BN}{Bayesian network}

\DeclareMathOperator{\E}{\mathbb{E}}
\newcommand{\ldotsTwo}{%
  \mathinner{{\ldotp}{\ldotp}}%
}
\hyphenation{op-tical net-works semi-conduc-tor IEEE-Xplore}

\begin{document}

\title{A Statistical Characterization of Wireless Channels Conditioned on Side Information}

\author{Benedikt B\"ock, \IEEEmembership{Graduate Student Member, IEEE}, Michael Baur, \IEEEmembership{Graduate Student Member, IEEE},\\ Nurettin Turan, \IEEEmembership{Graduate Student Member, IEEE}, Dominik Semmler, \IEEEmembership{Graduate Student Member, IEEE},\\ and Wolfgang Utschick, \IEEEmembership{Fellow, IEEE}
\thanks{\textcopyright This work has been submitted to the IEEE for possible publication.
Copyright may be transferred without notice, after which this version may no
longer be accessible.}
}

\maketitle

\begin{abstract}
Statistical prior channel knowledge, such as the \ac{WSSUS} property, and additional side information both can be used to enhance physical layer applications in wireless communication. Generally, the wireless channel's strongly fluctuating path phases and \ac{WSSUS} property characterize the channel by a zero mean and Toeplitz-structured covariance matrices in different domains. In this work, we derive a framework to comprehensively categorize side information based on whether it preserves or abandons these statistical features conditioned on the given side information. To accomplish this, we combine insights from a generic channel model with the representation of wireless channels as probabilistic graphs. Additionally, we exemplify several applications, ranging from channel modeling to estimation and clustering, which demonstrate how the proposed framework can practically enhance physical layer methods utilizing \ac{ML}. 
\end{abstract}

\begin{IEEEkeywords}
Wide-sense-stationary-uncorrelated-scattering, probabilistic graphs, Toeplitz structure, joint communication and sensing, channel modeling.
\end{IEEEkeywords}

\section{Introduction}

Statistical knowledge about the first and second moment of the wireless channel between a \ac{UE} and a \ac{BS} is of key importance to improve their communication link. Among other things, the first and second moment of the channel can be used to improve channel estimation \cite[Sec. 3.5.2]{Tse2005}, provide information about channel parameters \cite[Sec. 2.6]{bjoernson2017}, or can be employed in other applications, where instantaneous \ac{CSI} is not available or costly to acquire, with the advantage of reduced pilot and computational overhead \cite{Vagenas2011}. Hence, structural prior information about the channel's first and second moment can be beneficial, e.g., to reduce the number of required samples for estimating these moments \cite{boeck2023}. A prominent example of such prior information is the \ac{WSSUS} assumption, which characterizes the wireless channel as a \ac{WSS} process in both the temporal and frequency domains and, thus, constrains the channel \acp{CM} in these domains to be Toeplitz structured \cite{bello1963}. Equivalently, the \ac{WSSUS} assumption can be extended to the spatial domain, leading to a spatial \ac{CM} with Toeplitz structure \cite[Sec. 2.6]{yin2016}.

Recently, leveraging additional side information about the wireless channel between the \ac{UE} and the \ac{BS} to enhance their communication link has attracted a lot of attention in research. While this side information either can be interpretable in form of the \ac{UE}'s position \cite{Zhang2021}, it can also be given in an abstract representation by some \ac{ML}-based latent embedding~\cite{Studer2018}. Depending on the side information characteristics, the conditioning on this information can either preserve or abandon structural features of the channel's mean and \ac{CM}. In this work, we aim to establish a comprehensive framework for characterizing any side information based on its influence on the first and second channel moments. Our main contributions are the following:

\begin{itemize}
    \item We establish a theorem leveraging the statistical relation between arbitrary side information and the complex path loss phases of a channel to describe how the \ac{WSSUS} and zero-mean channel properties are either preserved or abandoned given this side information.
    \item By combining this theorem with a probabilistic graph representation of wireless channels, we introduce a framework, which allows to comprehensively categorize side information regarding its effect on the channel's \ac{WSSUS} and zero-mean properties.
    \item We present various exemplary applications of this framework. Specifically, we introduce a validation technique for the proper training of \ac{ML}-based channel models, regularize channel clustering and analyze the utility of side information for channel estimation by means of our proposed framework.
\end{itemize}

\section{Preliminaries}
\subsection{Channel Model}
\label{sec:channel_model}

Over the last decades, several channel models have been proposed following different paradigms and imposing slightly different assumptions \cite{almers2007}. In this work, we consider a generic wideband and time-varying \ac{MIMO} baseband channel, which is sampled equidistantly in the time, frequency as well as the spatial domains. This is the case in the typical \ac{OFDM} setup with constant subcarrier spacings and constant symbol durations, in which the transmitter and the receiver are both equipped with \acp{ULA}. The resulting channel (tensor) is given by
\begin{equation}
    \label{eq:channel_tensor}
    \bm{H} = \sum_{\ell=1}^L \sqrt{p_\ell}\mathrm{e}^{-\operatorname{j}\beta_\ell} \bm{a}_f(\tau_\ell) \otimes \bm{a}_t(\nu_\ell) \otimes \bm{a}_{\operatorname{R}}(\theta^{(\operatorname{R})}_\ell) \otimes \bm{a}_{\operatorname{T}}(\theta^{(\operatorname{T})}_\ell)
\end{equation}
characterized by $L$ paths and the channel parameters, i.e., its complex path losses $\sqrt{p_\ell}\mathrm{e}^{-\operatorname{j}\beta_\ell}$, delays $\tau_\ell$, Doppler-shifts $\nu_\ell$, \ac{DoAs} $\theta^{(\operatorname{R})}_\ell$, and \ac{DoDs} $\theta^{(\operatorname{T})}_\ell$. The vectors $\bm{a}_{(\cdot)}(\cdot)$ denote the steering vectors across the different domains, respectively. 
The path loss phase shifts $\{\beta_\ell\}_{\ell=1}^L$ contain polarization effects as well as the center frequency phase shift $2\pi d_\ell/\lambda_c$ with center wavelength $\lambda_c$ and path distance $d_\ell$. Our theoretical findings in Section \ref{sec:main_result} build on the following model assumptions.
\begin{assumption}
\label{as:dl}
The phases $\beta_\ell$ are uniformly distributed, i.e.,
\begin{equation}
    \label{eq:beta_uniform}
    \beta_\ell \sim \mathcal{U}(0,2 \pi)\ \text{for}\ \text{all}\ \ell.
\end{equation}
\end{assumption}
Assumption \ref{as:dl} holds when the path distances $d_\ell$ are not known on a scale of the center wavelength $\lambda_c$ and $d_\ell \gg \lambda_c$ \cite[Sec. 2.4.2]{Tse2005}. This is the case in typical wireless communication scenarios rendering Assumption \ref{as:dl} to be generally reasonable.
\begin{assumption}
\label{as:independence}
The phases $\beta_\ell$ are statistically independent of all channel parameters as well as across different paths, i.e., \begin{align}
    \label{eq:beta_assumption}
    & \beta_\ell \perp \kappa_{\ell'}\ \text{for}\ \text{all}\ \ell, \ell'\ \text{and}\ \kappa \in \{p,\tau,\nu,\theta^{(\operatorname{R})},\theta^{(\operatorname{T})}\}, \\
    \label{eq:para_assumption}
   & \beta_\ell \perp \beta_{\ell'}\ \text{for}\ \text{all}\ \ell \neq \ell'.
\end{align}
\end{assumption}
Although rigorously, the phases $\{\beta_\ell\}_{\ell=1}^L$ depend on the delays $\tau_\ell$, they are commonly modeled to satisfy Assumption \ref{as:independence} due to their strong fluctuations and the multitude of different influential effects contained \cite[Sec. 7.5]{3gpp}.
In addition, within small frequency, time and spatial ranges, the channel parameters and steering vectors are constant over the frequency $f$, the time $t$ and the positions of the transmitting and receiving antennas. As a result, the channel exhibits a zero mean and the \ac{WSSUS} property, i.e., it is a \ac{WSS} process across all domains~\cite{bello1963}. This implies $\bm{H}$ in \eqref{eq:channel_tensor} to possess a zero mean and a Toeplitz structured \ac{CM} in any domain. The steering vectors are given by 
\begin{align}
    \label{eq:a_tau}
    \bm{a}_f(\tau_\ell) = [1,\mathrm{e}^{-\operatorname{j}2\pi\Delta f\tau_\ell},\ldotsTwo,\mathrm{e}^{-\operatorname{j}2\pi\Delta f(M_{\mathrm{SC}}-1)\tau_\ell}]^{\operatorname{T}},\\
    \label{eq:a_nu}
    \bm{a}_t(\nu_\ell) = [1,\mathrm{e}^{-\operatorname{j}2\pi\Delta T\nu_\ell},\ldotsTwo,\mathrm{e}^{-\operatorname{j}2\pi\Delta T(M_{\mathrm{SN}}-1)\nu_\ell}]^{\operatorname{T}},\\
    \label{eq:a_R}
    \bm{a}_{\operatorname{R}}(\theta^{(\operatorname{R})}_{\ell}) = [1,\mathrm{e}^{-\operatorname{j}\pi\sin(\theta^{(\operatorname{R})}_{\ell})},\ldotsTwo,\mathrm{e}^{-\operatorname{j}\pi(M_{\mathrm{R}}-1)\sin(\theta^{(\operatorname{R})}_{\ell})}]^{\operatorname{T}},\\
    \label{eq:a_T}
    \bm{a}_{\operatorname{T}}(\theta^{(\operatorname{T})}_{\ell}) = [1,\mathrm{e}^{-\operatorname{j}\pi\sin(\theta^{(\operatorname{T})}_{\ell})},\ldotsTwo,\mathrm{e}^{-\operatorname{j}\pi(M_{\mathrm{T}}-1)\sin(\theta^{(\operatorname{T})}_{\ell})}]^{\operatorname{T}}
\end{align}
with subcarrier spacing $\Delta f$, symbol duration $\Delta T$, half wavelength antenna spacing, and number of subcarriers $M_{\mathrm{SC}}$, symbols $M_{\mathrm{SN}}$, and receive and transmit antennas $M_\mathrm{R}$ and $M_\mathrm{T}$, respectively. We do not explicitly specify statistical characteristics for any other channel parameters, as our subsequent findings apply in general.
In the remainder of this work, we use $\bm{H}$ and its vectorized version $\text{vec}(\bm{H})$ interchangeably.
\subsection{Statistical Independence in Bayesian Networks}
\label{sec:d_seperation}
A probabilistic graph corresponds to a graphical representation of a statistical model, in which random variables/vectors are modeled as nodes and statistical dependencies as edges. The causality between two dependent nodes, if clear, is encoded by a directed edge. If all edges in a probabilistic graph are directed, the graph is called a \ac{BN}. An example of a \ac{BN} is given in Fig. \ref{fig:setups} a). One advantage of explicitly representing a statistical model as a \ac{BN} is the possibility to directly infer conditional dependencies between nodes across the whole graph. To do so, two different setups have to be distinguished. The triplet $(A,C,B)$ in Fig. \ref{fig:setups} a) builds a v-structure due to both directed edges pointing towards $C$ (i.e., $A \rightarrow C \leftarrow B$). We assume that neither $A$ nor $B$ deterministically determines $C$. Then, the endpoints $A$ and $B$ in a v-structure are so-called d-separated if and only if neither the center node $C$ nor one of its descendants (i.e., $D$ in Fig. \ref{fig:setups} a)) is observed \cite[Sec. 3.3.1]{Koller2009}. If the d-separation is so-called sound, it implies statistical independence, which is typically the case and is assumed throughout this work \cite[Sec. 3.3.2]{Koller2009}. In any other configuration of arrows (e.g., $A \rightarrow C \rightarrow D$), the endpoints $A$ and $D$ are d-separated if and only if the center node $C$ is observed. If the two endpoints in every triplet of adjacent nodes in any trail in the \ac{BN} between two specific nodes of interest exhibit d-separation, these two nodes are d-separated and, thus, statistically independent.  

\begin{figure}[t]
    \includegraphics{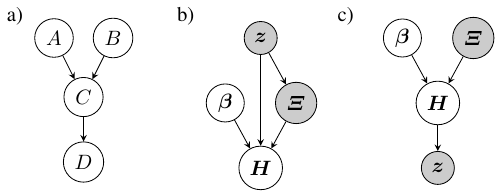}
	\caption{a) Exemplary Bayesian network, b) the sensing and modeling setup and c) the direct inference setup.}
	\label{fig:setups}
	\vspace{-0.6cm}
\end{figure}

\section{Main Result}
\label{sec:main_result}
Our characterization of side information $\bm{z}$ is based on the following observation. Preserving the zero channel mean and Toeplitz structured channel \acp{CM} by conditioning on $\bm{z}$ is solely linked to the impact of $\bm{z}$ on $\bm{\beta} = \{\beta_\ell\}_{\ell=1}^L$. Formally, this observation is presented in Theorem \ref{th:beta_main}.
\begin{theorem}
\label{th:beta_main}
Let $\bm{H}$ be defined according to \eqref{eq:channel_tensor} with Assumptions \ref{as:dl} and \ref{as:independence}, and steering vectors \eqref{eq:a_tau}-\eqref{eq:a_T}. Let $\bm{z}$ be any side information about $\bm{H}$. Moreover, let $\bm{\Xi}$ contain the channel parameters $\{p_\ell,\tau_\ell,\nu_\ell,\theta_\ell^{(\operatorname{R})},\theta_\ell^{(\operatorname{T})}\}_{\ell=1}^{L}$. Then, if 
\begin{equation}
    \label{eq:beta_l_uniform}
    \beta_\ell | (\bm{\Xi}, \bm{z}) \sim \mathcal{U}([0,2\pi])\ \text{for}\ \text{all}\ \ell = 1,\ldots,L
\end{equation}
holds true, it implies 
\begin{equation}
    \label{eq:preserve_hz}
    \E[\bm{H}|\bm{z}] = \bm{0},\E[\bm{H}\bm{H}^{\operatorname{H}}|\bm{z}] \in \bar{\mathcal{C}}_{\mathcal{T}}
\end{equation}
for any arbitrary distribution $p(\bm{\Xi})$ and with $\bar{\mathcal{C}}_{\mathcal{T}} = \mathcal{C}^{(M_{\mathrm{SC}})}_{\mathcal{T}} \otimes \mathcal{C}^{(M_{\mathrm{SN}})}_{\mathcal{T}} \otimes \mathcal{C}^{(M_{\mathrm{R}})}_{\mathcal{T}} \otimes \mathcal{C}^{(M_{\mathrm{T}})}_{\mathcal{T}}$, where $\mathcal{C}^{(F)}_{\mathcal{T}}$ denotes the set of $F \times F$ Toeplitz structured \acp{CM}. 
\end{theorem}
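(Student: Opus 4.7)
The plan is to apply the tower rule $\E[\cdot\mid\bm{z}] = \E[\E[\cdot\mid\bm{\Xi},\bm{z}]\mid\bm{z}]$ and carry out the main work at the inner level, where the steering vectors and the amplitudes $\sqrt{p_\ell}$ are deterministic (because all entries of $\bm{\Xi}$ are conditioned on) and only the phases $\bm{\beta}=\{\beta_\ell\}_{\ell=1}^{L}$ remain random. Since $\bar{\mathcal{C}}_{\mathcal{T}}$ is a closed linear subspace of matrices, it suffices to show that $\E[\bm{H}\mid\bm{\Xi},\bm{z}]=\bm{0}$ and $\E[\bm{H}\bm{H}^{\operatorname{H}}\mid\bm{\Xi},\bm{z}]\in\bar{\mathcal{C}}_{\mathcal{T}}$ almost surely; the outer expectation over $\bm{\Xi}\mid\bm{z}$ then transports both properties to the $\bm{z}$-conditional moments.

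For the first moment, linearity of expectation and the Kronecker structure of each summand of \eqref{eq:channel_tensor} reduce $\E[\bm{H}\mid\bm{\Xi},\bm{z}]$ to a sum of deterministic rank-one tensors weighted by $\sqrt{p_\ell}\,\E[\mathrm{e}^{-\operatorname{j}\beta_\ell}\mid\bm{\Xi},\bm{z}]$. Under hypothesis \eqref{eq:beta_l_uniform}, each such coefficient equals $\tfrac{1}{2\pi}\int_{0}^{2\pi}\mathrm{e}^{-\operatorname{j}\beta}\,\mathrm{d}\beta = 0$, yielding $\E[\bm{H}\mid\bm{\Xi},\bm{z}]=\bm{0}$ and hence $\E[\bm{H}\mid\bm{z}]=\bm{0}$. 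For the second moment, I set $\bm{v}_\ell=\bm{a}_f(\tau_\ell)\otimes\bm{a}_t(\nu_\ell)\otimes\bm{a}_{\operatorname{R}}(\theta_\ell^{(\operatorname{R})})\otimes\bm{a}_{\operatorname{T}}(\theta_\ell^{(\operatorname{T})})$ and expand
\begin{equation*}
    \bm{H}\bm{H}^{\operatorname{H}} = \sum_{\ell,\ell'=1}^{L}\sqrt{p_\ell p_{\ell'}}\,\mathrm{e}^{-\operatorname{j}(\beta_\ell-\beta_{\ell'})}\,\bm{v}_\ell\bm{v}_{\ell'}^{\operatorname{H}}.
\end{equation*}
The diagonal contribution ($\ell=\ell'$) collapses the exponential to one and gives $\sum_{\ell}p_\ell\,\bm{v}_\ell\bm{v}_\ell^{\operatorname{H}}$, while every off-diagonal term picks up the scalar $\E[\mathrm{e}^{-\operatorname{j}(\beta_\ell-\beta_{\ell'})}\mid\bm{\Xi},\bm{z}]$.

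The principal obstacle is establishing that these off-diagonal factors vanish: marginal uniformity of each $\beta_\ell$ alone does not, in general, force the cross characteristic function on the torus to be zero (consider $\beta_1=\beta_2$ a.s.). I would resolve this by reading \eqref{eq:beta_l_uniform} together with Assumption \ref{as:independence} as a preservation statement, namely that conditioning on $(\bm{\Xi},\bm{z})$ maintains the mutually independent uniform structure of $\bm{\beta}$ that Assumption \ref{as:independence} provides unconditionally. Under this conditional independence, the off-diagonal expectation factorizes as $\E[\mathrm{e}^{-\operatorname{j}\beta_\ell}\mid\bm{\Xi},\bm{z}]\,\E[\mathrm{e}^{\operatorname{j}\beta_{\ell'}}\mid\bm{\Xi},\bm{z}]=0$ for $\ell\neq\ell'$, and $\E[\bm{H}\bm{H}^{\operatorname{H}}\mid\bm{\Xi},\bm{z}]$ collapses to $\sum_{\ell}p_\ell\,\bm{v}_\ell\bm{v}_\ell^{\operatorname{H}}$.

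The remaining step is purely structural. Using the mixed-product property of the Kronecker product, $\bm{v}_\ell\bm{v}_\ell^{\operatorname{H}}$ factors as $[\bm{a}_f(\tau_\ell)\bm{a}_f(\tau_\ell)^{\operatorname{H}}]\otimes[\bm{a}_t(\nu_\ell)\bm{a}_t(\nu_\ell)^{\operatorname{H}}]\otimes[\bm{a}_{\operatorname{R}}(\theta_\ell^{(\operatorname{R})})\bm{a}_{\operatorname{R}}(\theta_\ell^{(\operatorname{R})})^{\operatorname{H}}]\otimes[\bm{a}_{\operatorname{T}}(\theta_\ell^{(\operatorname{T})})\bm{a}_{\operatorname{T}}(\theta_\ell^{(\operatorname{T})})^{\operatorname{H}}]$, and inspection of \eqref{eq:a_tau}-\eqref{eq:a_T} shows that the $(m,n)$-entry of each Kronecker factor depends only on $m-n$ (e.g.\ the $\bm{a}_f$-factor contributes $\mathrm{e}^{-\operatorname{j}2\pi\Delta f(m-n)\tau_\ell}$). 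Each factor thus lies in the corresponding $\mathcal{C}_{\mathcal{T}}^{(\cdot)}$ and $\bm{v}_\ell\bm{v}_\ell^{\operatorname{H}}\in\bar{\mathcal{C}}_{\mathcal{T}}$. Linearity of $\bar{\mathcal{C}}_{\mathcal{T}}$ absorbs the sum over $\ell$ and the subsequent outer expectation over $\bm{\Xi}\mid\bm{z}$, completing the argument.
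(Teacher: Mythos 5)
Your proof follows essentially the same route as the paper's: condition on $(\bm{\Xi},\bm{z})$ via the tower rule, kill the first moment and the off-diagonal second-moment terms through the vanishing circular characteristic function, and observe that the surviving diagonal terms are Kronecker products of rank-one steering outer products, each of which is Toeplitz. The one place you are more careful than the paper is the off-diagonal step: you correctly note that marginal conditional uniformity of each $\beta_\ell$ does not by itself annihilate $\E[\mathrm{e}^{-\operatorname{j}(\beta_\ell-\beta_{\ell'})}\mid\bm{\Xi},\bm{z}]$ and that conditional independence of the phases given $(\bm{\Xi},\bm{z})$ must be read into the hypotheses, whereas the paper simply invokes the unconditional independence assumption \eqref{eq:para_assumption} at this point.
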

\begin{proof}
See Appendix \ref{proof_beta_main}.
\end{proof}
Theorem \ref{th:beta_main} states that if the conditioning on $\bm{z}$ does not affect the statistical characteristics of $\beta_\ell$, then the channel's \ac{WSSUS} and zero-mean properties are preserved, independent of the statistical characteristics of the other channel parameters $\bm{\Xi}$ and any potential effects of $\bm{z}$ on these parameters. The cases where \eqref{eq:beta_l_uniform} is either true or false can be divided in just two distinct setups by means of a \ac{BN} representation, which leads to a comprehensive characterization of side information on its effect on $\beta_\ell$ and, thus, on the channel's \ac{WSSUS} and zero-mean properties. 

\subsection{Sensing and Modeling}
\label{sec:modelling_setup}
One setup is illustrated by the \ac{BN} given in Fig. \ref{fig:setups} b). It describes the situation in which the side information $\bm{z}$ is used to infer the channel parameters $\bm{\Xi}$ and/or the channel $\bm{H}$ without being directly observed through $\bm{H}$ itself. Since we condition on $(\bm{\Xi},\bm{z})$ in \eqref{eq:beta_l_uniform}, these variables are considered to be observed and marked gray. A multitude of different situations fall into this case, ranging from the co-existed design of joint communication and sensing with separate resources for communication and sensing functions \cite{Feng2021} to classical and modern \ac{ML}-based channel models based on, e.g., \acp{VAE} (cf. Section \ref{sec:channel_modelling}). 
The only two trails in Fig. \ref{fig:setups} b) between the observed $\bm{z}$ and $\bm{\beta} = \{\beta_\ell\}_{\ell=1}^L$ are $\bm{z} \rightarrow \bm{H} \leftarrow \bm{\beta}$ and $\bm{z} \rightarrow \bm{\Xi} \rightarrow \bm{H} \leftarrow \bm{\beta}$. Moreover, the only trails between the observed $\bm{\Xi}$ and $\bm{\beta}$ are given by $\bm{\Xi} \rightarrow \bm{H} \leftarrow \bm{\beta}$ and $\bm{\Xi} \leftarrow \bm{z} \rightarrow \bm{H} \leftarrow \bm{\beta}$. By applying the rules from Section \ref{sec:d_seperation} and assuming that $\bm{z}$ does not deterministically determine $\bm{H}$, we see that all trails contain a v-structure with non-observed center node $\bm{H}$ and no observed descendants. We conclude that $\bm{z}$ and $\bm{\beta}$ as well as $\bm{\Xi}$ and $\bm{\beta}$ are statistically independent. Hence, $p(\beta_\ell|\bm{\Xi},\bm{z})$ equals $p(\beta_\ell)$ for all $\ell=1,\ldots,L$ and due to \eqref{eq:beta_uniform}, \eqref{eq:beta_l_uniform} holds true in general.

\subsection{Direct Inference}
\label{sec:inference_setup}
The other setup describes the situation in which the side information $\bm{z}$ is a direct observation of the channel $\bm{H}$ itself. The arguably most important example for this setup is channel estimation, where $\bm{z}$ represents a noisy observation of the channel $\bm{H}$. The corresponding \ac{BN} is given in Fig. \ref{fig:setups} c). Since the center node $\bm{H}$ in $\bm{\beta} \rightarrow \bm{H} \rightarrow \bm{z} $ is not observed, $\bm{\beta}$ and $\bm{z}$ are not independent. Moreover, since the descendant $\bm{z}$ of the center node $\bm{H}$ in $\bm{\beta} \rightarrow \bm{H} \leftarrow \bm{\Xi}$ is observed, $\bm{\beta}$ and $\bm{\Xi}$ are also not independent. Thus, in this setup, both observed variables $(\bm{\Xi},\bm{z})$ potentially influence the statistics of $\bm{\beta}$ and we cannot claim \eqref{eq:beta_l_uniform} to hold true in general.

\section{Applications}

Section \ref{sec:main_result} provides the means to categorize any side information based on the way it is acquired and related to the channel parameters. In this section, we discuss possibilities how this characterization can be utilized to enhance or to analyze applications for wireless communication. 

\begin{figure}
    \centering
    \includegraphics{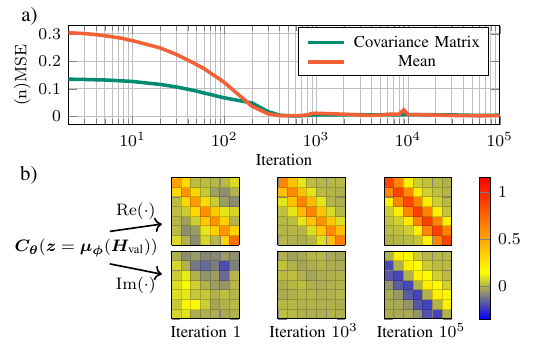}
\vspace{-0.5cm}
\caption{a) $\mathrm{nMSE}$ of the output covariance matrix to its Toeplitz projection and $\mathrm{MSE}$ of the output mean to zero over training iterations, b) real and imaginary parts of an exemplary output covariance matrix $\bm{C}_{\bm{\theta}}(\bm{z} = \bm{\mu}_{\bm{\phi}}(\bm{H}_{\text{val}}))$ generated by the same validation sample $\bm{H}_{\text{val}}$ after $1$, $10^3$ and $10^5$ iterations.}
\vspace{-0.6cm}
\label{fig:channel_modelling}
\end{figure}

\subsection{Channel Modeling}
\label{sec:channel_modelling}
One possible application is given in the context of modern \ac{ML}-based channel modeling, where generative models, e.g., \acp{GAN} \cite{Yang2019} or \acp{VAE}~\cite{baur2023}, are used to capture the underlying channel distribution from a training set of channel realizations. In this context, \acp{VAE} aim to encode the channel specific statistical features in a latent variable $\bm{z}$ by learning two distributions $q_{\bm{\phi}}(\bm{z}|\bm{H})$ and $p_{\bm{\theta}}(\bm{H}|\bm{z})$, where $\bm{H}$ is the channel (cf. \eqref{eq:channel_tensor}) and $(\bm{\phi},\bm{\theta})$ are \ac{NN} parameters. Typically, $p_{\bm{\theta}}(\bm{H}|\bm{z})$ and $q_{\bm{\phi}}(\bm{z}|\bm{H})$ are modeled as conditionally Gaussian, i.e., $p_{\bm{\theta}}(\bm{H}|\bm{z}) = \mathcal{N}_{\mathbb{C}}(\bm{H};\bm{\mu}_{\bm{\theta}}(\bm{z}),\bm{C}_{\bm{\theta}}(\bm{z}))$ and $q_{\bm{\phi}}(\bm{z}|\bm{H}) = \mathcal{N}(\bm{z};\bm{\mu}_{\bm{\phi}}(\bm{H}),\text{diag}(\bm{\sigma}_{\bm{\phi}}(\bm{H})^2))$. The \ac{VAE}'s training is based on forwarding training channels $\bm{H}_{\text{train}}$ through an encoder-decoder processing chain yielding $(\bm{\mu}_{\bm{\phi}}(\bm{H}_{\text{train}}),\bm{\sigma}_{\bm{\phi}}(\bm{H}_{\text{train}}))$ as well as $(\bm{\mu}_{\bm{\theta}}(\tilde{\bm{z}}),\bm{C}_{\bm{\theta}}(\tilde{\bm{z}}))$ with $\tilde{\bm{z}} \sim q_{\bm{\phi}}(\bm{z}|\bm{H}_{\text{train}})$\footnote{We refer to \cite{baur2023} for more details about \acp{VAE} for wireless communication.}. Due to the strong fluctuations of $\beta_\ell$ (cf. Section \ref{sec:channel_model}), these phases cannot contain distinct statistical channel characteristics rendering it unlikely that the \ac{VAE} stores $\beta_\ell$-specific information in $\bm{z}$. In consequence, the \ac{BN} in Fig. \ref{fig:setups} b) applies, \eqref{eq:beta_l_uniform} holds, and $\bm{\mu}_{\bm{\theta}}(\bm{z})$ and $\bm{C}_{\bm{\theta}}(\bm{z})$ are expected to be learned to be zero and Toeplitz in any domain, respectively (cf. Section \ref{sec:modelling_setup}). Thus, we can utilize Theorem \ref{th:beta_main} as a tool to verify the \ac{VAE}'s correct training. In Fig. \ref{fig:channel_modelling}, the behaviour of the parameterized conditional mean and \ac{CM} during the \ac{VAE}'s training is shown. As training set we used $50\,000$ narrowband and static channels generated by the geometry-based stochastic channel model QuaDRiGa \cite{Jaeckel2014}, which are randomly sampled in a $120\,^{\circ}$ sector of the ``3GPP\_38.901\_UMa\_NLOS'' scenario. The \ac{BS} and each user is equipped with $8$ and $1$ antennas, respectively, which results in $\bm{H}$ exhibiting solely the spatial receiver domain. We allowed $\bm{C}_{\bm{\theta}}(\bm{z})$ to take any unstructured \ac{CM} during training by outputting its arbitrarily learnable Cholesky decomposition $\bm{L}_{\bm{\theta}}(\bm{z})$ with $\bm{C}_{\bm{\theta}}(\bm{z}) = \bm{L}_{\bm{\theta}}(\bm{z})\bm{L}_{\bm{\theta}}(\bm{z})^{\operatorname{H}}$. Similary, we also allowed $\bm{\mu}_{\bm{\theta}}(\bm{z})$ to take any value. In Fig. \ref{fig:channel_modelling} a), the \ac{NMSE} $\mathrm{nMSE} = 1/500 \sum_{n=1}^{500}(\|\bm{C}_{\bm{\theta}}(\bm{z}_n) - \bm{C}^{(\operatorname{P})}_{\bm{\theta}}(\bm{z}_n)\|_{\operatorname{F}}^2)/\|\bm{C}_{\bm{\theta}}(\bm{z}_n)\|_{\operatorname{F}}^2$ of output covariances matrices $\bm{C}_{\bm{\theta}}(\bm{z}_n)$ to their orthogonal projection $\bm{C}^{(\operatorname{P})}_{\bm{\theta}}(\bm{z}_n)$ on Toeplitz matrices (cf. \cite{Grigoriadis1994}) over the training iterations is shown. Additionally, the \ac{MSE} $\mathrm{MSE} = 1/500 \sum_{n=1}^{500}\|\bm{\mu}_{\bm{\theta}}(\bm{z}_n) - \bm{0}\|_2^2$ of output means $\bm{\mu}_{\bm{\theta}}(\bm{z}_n)$ to zero is given. Both are generated from randomly sampled latent realizations $\{\bm{z}_n\}_{n=1}^{500}$ after each iteration. It can be seen that both measures decrease significantly, approaching zero. Remarkably, Fig. \ref{fig:channel_modelling} a) shows that the \ac{VAE}, although not explicitly constrained, is trained to solely output Toeplitz structured \acp{CM} and zero means. This behaviour indicates correct training in the sense that $\bm{z}$ captures distribution relevant features and encodes no information about $\bm{\beta}$. Additionally, in Fig. \ref{fig:channel_modelling} b), the convergence of $\bm{C}_{\bm{\theta}}(\bm{z})$ towards Toeplitz structured matrices is illustrated by the real and imaginary part of one exemplary output \ac{CM}, where the conditional mean $\bm{\mu}_{\bm{\phi}}(\bm{H}_{\text{val}})$ of $q_{\bm{\phi}}(\bm{z}|\bm{H}_{\text{val}})$ for a fixed validation channel $\bm{H}_{\text{val}}$ is used to generate the plotted $\bm{C}_{\bm{\theta}}(\bm{z} = \bm{\mu}_{\bm{\phi}}(\bm{H}_{\text{val}}))$. These findings also theoretically underpin results from \cite{baur2023} and \cite{Fesl2024}, where it is observed that constraining $\bm{C}_{\bm{\theta}}(\bm{z})$ to be Toeplitz (or circulant as Toeplitz approximation) or $\bm{\mu}_{\bm{\theta}}(\bm{z})$ to be zero results in \acp{VAE} with strong channel estimation performance.

\subsection{Channel Clustering}
\begin{figure}[t]
    \centering
    \includegraphics{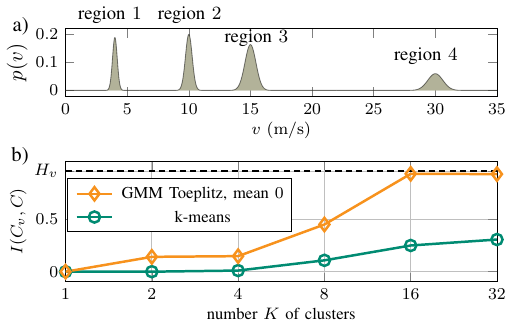}
\caption{a) Velocity probability distribution with four distinct regions, b) mutual information $I(C_v,C)$ between the GMM Toeplitz and zero mean ($C_g$) or k-means ($C_k$) clustering and the ground truth velocity clustering $C_v$ with $C \in \{C_g,C_k\}$.}
\label{fig:clustering}
	\vspace{-0.6cm}
\end{figure}

In the previous section, we utilized Theorem \ref{th:beta_main} to verify correct training and to interpret the information encoded in the \ac{VAE}'s latent embedding. In this section, we actively build on Theorem \ref{th:beta_main} to directly regularize a clustering algorithm towards the desired outcome. 
We consider a situation, in which we aim to cluster time-varying channel trajectories according to the users' velocities without having access to any explicit velocity information. In this example, $\bm{z}$ represents the discrete and finite clusters. If $\bm{z}$ solely encodes the velocities and, thus, information about $\bm{\Xi}$, the \ac{BN} in Fig. \ref{fig:setups} b) applies, and \eqref{eq:beta_l_uniform} holds (cf. Section \ref{sec:modelling_setup}). In consequence, we know $\E[\bm{H}\bm{H}^{\operatorname{H}}|\bm{z} = \text{cluster}\ i]$ to be Toeplitz in any domain and the mean $\E[\bm{H}|\bm{z} = \text{cluster}\ i]$ to be zero for all $i$ in advance (cf. Theorem \ref{th:beta_main}), and, thus, any mean-based clustering algorithm (e.g., k-means) to be sub-optimal. On the other hand, while \acp{GMM} can also be used as generative models as in Section \ref{sec:channel_modelling}, \ac{GMM}-based clustering allows to incorporate the insights of Theorem \ref{th:beta_main}. Specifically, \acp{GMM} assign an unconstrained Gaussian distribution to every cluster individually. However, it is also possible to regularize their means and \acp{CM}. In the following, we utilize the \ac{GMM} covariance Toeplitz parameterization and clustering from \cite{Turan2023}, in which the cluster indices are used for \ac{CSI} feedback\footnote{We refer to \cite{Turan2023Journal} for more details about \acp{GMM} for wireless communication.}. However, in addition to \cite{Turan2023}, we also enforce the \ac{GMM} means to be zero, such that every cluster is regularized to have a zero mean and Toeplitz structured \ac{CM}.  We trained this regularized \ac{GMM} with $80\,000$ \ac{SISO} narrowband time-varying channels, which are sampled $16$ times every $0.5$ ms in a $120\,^{\circ}$ sector of the ``3GPP\_38.901\_UMa\_NLOS'' scenario of QuaDRiGa. This results in $\bm{H}$ (cf. \eqref{eq:channel_tensor}) exhibiting solely the temporal domain. Each user's velocity is randomly drawn from the density $p(v)$ illustrated in Fig. \ref{fig:clustering} a). This density exhibits four distinct velocities regions (region 1-4), which allows a ground truth velocity clustering of a test set of $8000$ channel trajectories, where each trajectory is labeled with the corresponding user's velocity region. 
After training the regularized \ac{GMM}, which gets no direct velocity information during training as well as testing, we cluster the test set using the \ac{GMM}. For evaluation, we then compare the ground truth velocity clustering represented by the random variable $C_v$ with the \ac{GMM} clustering represented by $C_g$ by means of their mutual information $I(C_v,C_g)$ (cf. \cite{Vinh2010}). The same is done for k-means clustering $C_k$ and is illustrated for different k-means and \ac{GMM} cluster numbers $K$ in Fig. \ref{fig:clustering} b). It can be seen that the \ac{GMM} clustering generally shows higher mutual information with the ground truth velocity clustering than k-means. Remarkably, although the \ac{GMM} does not get any explicit velocity information, its mutual information achieves the entropy $H_v$ of the ground truth velocity clustering for $K=16$ and larger. This implies that in this regime, each \ac{GMM} cluster is associated with exactly one of the velocity regions 1-4 in Fig. \ref{fig:clustering} a), i.e., the regularized \ac{GMM} yields perfect velocity clustering. In the same $K$-regime, the mutual information of k-means shows a significant gap to $H_v$.

\begin{figure}[t]
    \centering
    \includegraphics{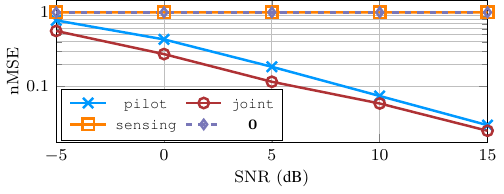}
\caption{$\mathrm{nMSE}$ over the SNR of the four channel estimators \texttt{pilot}, \texttt{sensing}, \texttt{joint} and the zero vector.}
\label{fig:estimation}
	\vspace{-0.6cm}
\end{figure}
\begin{figure*}[t]
\begin{align}
\label{eq:conditional_mean}
\E_{\bm{H}|\bm{z}}[\bm{H}] = & \E_{(\bm{\beta},\bm{\Xi})|\bm{z}}[\bm{H}] =
\E_{\bm{\Xi}|\bm{z}}\left[\sum_{\ell=1}^L \sqrt{p_\ell} \E_{\bm{\beta}|(\bm{z},\bm{\Xi})}\left[\mathrm{e}^{-\operatorname{j}\beta_\ell}\right]\bm{a}_{f,\ell} \otimes \bm{a}_{t,\ell} \otimes \bm{a}_{\operatorname{R},\ell} \otimes \bm{a}_{\operatorname{T},\ell}\right] \\
\label{eq:conditional_cov}
\E_{\bm{H}|\bm{z}}[\bm{H}\bm{H}^{\operatorname{H}}] = & \E_{\bm{\Xi}|\bm{z}}\left[\sum_{\ell,\ell'=1}^L \sqrt{p_\ell}\sqrt{p_{\ell'}} \E_{\bm{\beta}|(\bm{z},\bm{\Xi})}\left[\mathrm{e}^{-\operatorname{j}\beta_\ell}\mathrm{e}^{\operatorname{j}\beta_{\ell'}}\right]\bm{a}_{f,\ell}\bm{a}_{f,\ell'}^{\operatorname{H}} \otimes \bm{a}_{t,\ell}\bm{a}_{t,\ell'}^{\operatorname{H}} \otimes \bm{a}_{\operatorname{R},\ell}\bm{a}_{\operatorname{R},\ell'}^{\operatorname{H}} \otimes \bm{a}_{\operatorname{T},\ell}\bm{a}_{\operatorname{T},\ell'}^{\operatorname{H}}\right] 
\end{align}
\hrule
\vspace{-0.4cm}
\end{figure*}
\subsection{Channel Estimation}

A further application of Section \ref{sec:main_result} is given in the context of channel estimation. Theorem \ref{th:beta_main} characterizes the conditional mean $\E[\bm{H}|\bm{z}]$ of the channel $\bm{H}$ given some side information $\bm{z}$. It is well known that this conditional mean represents the \ac{MMSE} channel estimator based on $\bm{z}$. Thus, Theorem \ref{th:beta_main} in combination with the setups in Fig. \ref{fig:setups} establishes a framework to analyze which kind of information can be fundamentally utilized for estimating the channel. More precisely, if $\bm{z}$ is not statistically relevant for $\beta_\ell$, then the best channel estimate given $\bm{z}$ is the zero vector (cf. Theorem \ref{th:beta_main}). Consequently, $\bm{z}$ has to contain a descendant of $\bm{H}$, i.e., a pilot observation for channel estimation (cf. Fig. \ref{fig:setups} c)). Additionally, Section~\ref{sec:main_result} also theoretically underpins that as long as $\bm{z}$ contains some direct observation of $\bm{H}$, any additional side information about $\bm{\Xi}$ can improve the estimation. This is due to $\beta_\ell$ being able to statistically depend on $\bm{\Xi}$ if and only if $\bm{H}$ or one of its descendants (i.e, a direct observation) is given (cf. Section \ref{sec:d_seperation}). For evaluating these insights, we trained several fully-connected \acp{NN} for channel estimation in an end-to-end fashion. Specifically, we generated $10^5$ channels using \eqref{eq:channel_tensor} containing solely the spatial receiver domain of dimension $16$. All these realizations contain three paths, $\beta_\ell$ and $\theta_\ell^{(\operatorname{R})}$ are  uniformly distributed, and the path losses $\sqrt{p_\ell}$ are uniformly distributed between zero and one and then normalized to sum up to one. In Fig. \ref{fig:estimation}, the estimation performance in terms of their $\mathrm{nMSE}$ to the ground truth channel of three different \ac{NN}-based estimators over the \ac{SNR} is shown, denoted by \texttt{sensing}, \texttt{pilot} and \texttt{joint}. For evaluation, we used a test set of $5000$ channels. All \acp{NN} are trained by minimizing the \ac{MSE} between their output and the true channel, and their \ac{NN} depth and width were optimized by a random search using a validation set of size $5000$. As input for the \texttt{sensing} \ac{NN}, we took the ground truth steering vectors (cf. \eqref{eq:a_R}) of the corresponding channel's three paths. However, since these have no statistical relevance for $\beta_\ell$, \eqref{eq:beta_l_uniform} holds, the \ac{MMSE} estimator is given by the zero vector and the \texttt{sensing} network does not outperform the zero vector. On the other hand, the \texttt{pilot} network takes a noisy channel observation as input. Thus, a descendant of $\bm{H}$ is observed, Fig.~\ref{fig:setups}~c) applies and the estimator outperforms the zero vector (cf. Section \ref{sec:inference_setup}). The \texttt{joint} network takes both, the ground truth steering vectors as well as the noisy channel observation as input. Since a descendant of $\bm{H}$ (i.e., the noisy channel observation) is given, $\bm{\beta}$ and $\bm{\Xi}$ can statistically depend on each other (cf. Section \ref{sec:inference_setup}), and additional information about $\bm{\Xi}$ can improve the estimation, which is seen in Fig. \ref{fig:estimation} in form of \texttt{joint} outperforming \texttt{pilot}.

\section{Conclusion}

In this work, we introduced a comprehensive framework, which combines insights from a generic channel model with \acp{BN} to categorize side information on its effect on the channel's \ac{WSSUS} and zero-mean properties. This framework can be utilized in various ways. We demonstrated how it can be leveraged to analyze side information for channel generation or estimation, and to directly regularize channel clustering. While we discussed three particular applications, this analysis indicates that many more exist, which are part of future work.

\appendix

\subsection{Proof of Theorem \ref{th:beta_main}}
\label{proof_beta_main}
To prove Theorem \ref{th:beta_main}, we reformulate the conditional mean of $\bm{H}$ given $\bm{z}$ according to \eqref{eq:conditional_mean}, where we insert the definition \eqref{eq:channel_tensor} of $\bm{H}$. For simplicity, we also leave out the arguments of the steering vectors. The inner expectation $\E_{\bm{\beta}|(\bm{z},\bm{\Xi})}[\mathrm{e}^{-\operatorname{j}\beta_\ell}]$ equals zero according to the assumption $\beta_\ell | (\bm{\Xi}, \bm{z}) \sim \mathcal{U}([0,2\pi])$ in Theorem \ref{th:beta_main}. Equivalently, the conditional \ac{CM} $\E[\bm{H}\bm{H}^{\operatorname{H}}|\bm{z}]$ can be decomposed according to \eqref{eq:conditional_cov}. Considering assumption \eqref{eq:para_assumption} and the above reasoning, the inner expectation $\E_{\bm{\beta}|(\bm{z},\bm{\Xi})}[\mathrm{e}^{-\operatorname{j}\beta_\ell}\mathrm{e}^{\operatorname{j}\beta_{\ell'}}]$ equals zero for $\ell \neq \ell'$ and one otherwise. Since $\bm{a}_{(\cdot)}(\cdot)\bm{a}_{(\cdot)}(\cdot)^{\operatorname{H}}$ is a Toeplitz structured matrix for any domain and any channel parameter configuration (cf. \eqref{eq:a_tau}-\eqref{eq:a_T}), it is Toeplitz in expectation concluding the proof.

\bibliographystyle{IEEEtran.bst}
\bibliography{references.bib}

\begin{thebibliography}{10}
\providecommand{\url}[1]{#1}
\csname url@samestyle\endcsname
\providecommand{\newblock}{\relax}
\providecommand{\bibinfo}[2]{#2}
\providecommand{\BIBentrySTDinterwordspacing}{\spaceskip=0pt\relax}
\providecommand{\BIBentryALTinterwordstretchfactor}{4}
\providecommand{\BIBentryALTinterwordspacing}{\spaceskip=\fontdimen2\font plus
\BIBentryALTinterwordstretchfactor\fontdimen3\font minus
  \fontdimen4\font\relax}
\providecommand{\BIBforeignlanguage}[2]{{%
\expandafter\ifx\csname l@#1\endcsname\relax
\typeout{** WARNING: IEEEtran.bst: No hyphenation pattern has been}%
\typeout{** loaded for the language `#1'. Using the pattern for}%
\typeout{** the default language instead.}%
\else
\language=\csname l@#1\endcsname
\fi
#2}}
\providecommand{\BIBdecl}{\relax}
\BIBdecl

\bibitem{Tse2005}
D.~Tse and P.~Viswanath, \emph{Fundamentals of wireless communication}.\hskip
  1em plus 0.5em minus 0.4em\relax USA: Cambridge University Press, 2005.

\bibitem{bjoernson2017}
E.~Bj\"{o}rnson, J.~Hoydis, and L.~Sanguinetti, ``Massive {MIMO} networks:
  {Spectral}, energy, and hardware efficiency,'' \emph{Foundations and
  Trends{\textregistered} in Signal Processing}, vol.~11, no. 3-4, pp.
  154--655, 2017.

\bibitem{Vagenas2011}
E.~Vagenas, G.~S. Paschos, and S.~A. Kotsopoulos, ``Beamforming capacity
  optimization for {MISO} systems with both mean and covariance feedback,''
  \emph{IEEE Trans. Wireless Commun.}, vol.~10, no.~9, pp. 2994--3001, 2011.

\bibitem{boeck2023}
B.~B\"ock, D.~Semmler, B.~Fesl, M.~Baur, and W.~Utschick,
  ``{G}ohberg-{S}emencul estimation of {T}oeplitz structured covariance
  matrices and their inverses,'' 2023, arXiv:2311.14995.

\bibitem{bello1963}
P.~Bello, ``Characterization of randomly time-variant linear channels,''
  \emph{IEEE Trans. on Commun. Syst.}, vol.~11, no.~4, pp. 360--393, 1963.

\bibitem{yin2016}
X.~Yin and X.~Cheng, \emph{Propagation Channel Characterization, Parameter
  Estimation, and Modeling for Wireless Communication}.\hskip 1em plus 0.5em
  minus 0.4em\relax Wiley-IEEE Press, November 2016.

\bibitem{Zhang2021}
J.~A. Zhang, F.~Liu, C.~Masouros, R.~W. Heath, Z.~Feng, L.~Zheng, and
  A.~Petropulu, ``An overview of signal processing techniques for joint
  communication and radar sensing,'' \emph{IEEE J. Sel. Topics Signal
  Process.}, vol.~15, no.~6, pp. 1295--1315, 2021.

\bibitem{Studer2018}
C.~Studer, S.~Medjkouh, E.~Gonultaş, T.~Goldstein, and O.~Tirkkonen, ``Channel
  charting: Locating users within the radio environment using channel state
  information,'' \emph{IEEE Access}, vol.~6, pp. 47\,682--47\,698, 2018.

\bibitem{almers2007}
P.~Almers, E.~Bonek, A.~G. Burr, N.~Czink, M.~Debbah, V.~Degli‐Esposti,
  H.~Hofstetter, P.~Ky{\"o}sti, D.~I. Laurenson, G.~Matz, A.~F. Molisch,
  C.~Oestges, and H.~Ozcelik, ``Survey of channel and radio propagation models
  for wireless {MIMO} systems,'' \emph{EURASIP J. Wireless Commun. Netw.}, vol.
  2007, pp. 1--19, 2007.

\bibitem{3gpp}
3GPP, ``Study on channel model for frequencies from 0.5 to 100 {GHz} (release
  16),'' 3rd Generation Partnership Project (3GPP), Tech. Rep. TR 38.901
  version 16.1.0 Release 16, 2020.

\bibitem{Koller2009}
D.~Koller and N.~Friedman, \emph{Probabilistic Graphical Models: Principles and
  Techniques}.\hskip 1em plus 0.5em minus 0.4em\relax MIT Press, Cambridge,
  2009.

\bibitem{Feng2021}
Z.~Feng, Z.~Wei, X.~Chen, H.~Yang, Q.~Zhang, and P.~Zhang, ``Joint
  communication, sensing, and computation enabled {6G} intelligent machine
  system,'' \emph{IEEE Network}, vol.~35, no.~6, pp. 34--42, 2021.

\bibitem{Yang2019}
Y.~Yang, Y.~Li, W.~Zhang, F.~Qin, P.~Zhu, and C.-X. Wang,
  ``Generative-adversarial-network-based wireless channel modeling: Challenges
  and opportunities,'' \emph{IEEE Commun. Mag.}, vol.~57, no.~3, pp. 22--27,
  2019.

\bibitem{baur2023}
M.~Baur, B.~Fesl, and W.~Utschick, ``Leveraging variational autoencoders for
  parameterized {MMSE} estimation,'' 2024, arXiv:2307.05352.

\bibitem{Jaeckel2014}
S.~Jaeckel, L.~Raschkowski, K.~Börner, and L.~Thiele, ``Quadriga: A 3-d
  multi-cell channel model with time evolution for enabling virtual field
  trials,'' \emph{IEEE Trans. Antennas Propag.}, vol.~62, no.~6, pp.
  3242--3256, 2014.

\bibitem{Grigoriadis1994}
K.~Grigoriadis, A.~Frazho, and R.~Skelton, ``Application of alternating convex
  projection methods for computation of positive {T}oeplitz matrices,''
  \emph{IEEE Trans. Signal Process.}, vol.~42, no.~7, pp. 1873--1875, 1994.

\bibitem{Fesl2024}
B.~Fesl, N.~Turan, B.~Böck, and W.~Utschick, ``Channel estimation for
  quantized systems based on conditionally {G}aussian latent models,''
  \emph{IEEE Trans. Signal Process.}, pp. 1--16, 2024.

\bibitem{Turan2023}
N.~Turan, B.~Fesl, and W.~Utschick, ``Enhanced low-complexity {FDD} system
  feedback with variable bit lengths via generative modeling,'' in \emph{2023
  57th Asilomar Conf. Signals, Syst., Comput.}, 2023, pp. 363--369.

\bibitem{Turan2023Journal}
N.~Turan, B.~Fesl, M.~Koller, M.~Joham, and W.~Utschick, ``A versatile
  low-complexity feedback scheme for {FDD} systems via generative modeling,''
  \emph{IEEE Trans. Wireless Commun.}, pp. 1--1, 2023.

\bibitem{Vinh2010}
N.~X. Vinh, J.~Epps, and J.~Bailey, ``Information theoretic measures for
  clusterings comparison: Variants, properties, normalization and correction
  for chance,'' \emph{J. Mach. Learn. Res.}, vol.~11, p. 2837–2854, dec 2010.

\end{thebibliography}

\end{document}